\newcommand{\bburl}[1]{\textcolor{blue}{\url{#1}}}
\newcommand{\burl}[1]{\textcolor{blue}{\url{#1}}}
\newcommand{\abs}[1]{\left|#1\right|}
\numberwithin{equation}{section}
\newtheorem{thm}{Theorem}[section]
\newtheorem{lem}[thm]{Lemma}
\newtheorem{prop}[thm]{Proposition}
\theoremstyle{plain}
\newtheorem{theorem}[thm]{Theorem}
\newcommand\be{\begin{equation}}
\newcommand\ee{\end{equation}}
\newcommand\bee{\begin{equation*}}
\newcommand\eee{\end{equation*}}
\newcommand\bea{\begin{eqnarray}}
\newcommand\eea{\end{eqnarray}}
\newcommand\beae{\begin{eqnarray*}}
\newcommand\eeae{\end{eqnarray*}}
\newcommand\bi{\begin{itemize}}
\newcommand\ei{\end{itemize}}
\newcommand\ben{\begin{enumerate}}
\newcommand\een{\end{enumerate}}
\newcommand\bc{\begin{center}}
\newcommand\ec{\end{center}}
\newcommand\ba{\begin{array}}
\newcommand\ea{\end{array}}
\newcommand\frakfamily{\usefont{U}{yfrak}{m}{n}}
\DeclareTextFontCommand{\textfrak}{\frakfamily}
\newcommand{\hr}[1]{\href{#1}{\url{#1}}}
\title{Distinct Angles in General Position}
\author{Henry L. Fleischmann}
\email{\textcolor{blue}{\href{mailto:henryfl@umich.edu}{henryfl@umich.edu}}}
\address{Department of Mathematics, University of Michigan, Ann Arbor, 48109}
\author{Sergei V. Konyagin}
\email{\textcolor{blue}{\href{mailto:konyagin23@gmail.com}{konyagin23@gmail.com}}}
\address{Steklov Institute of Mathematics, 8 Gubkin Street, Moscow, 119991, Russia}
\author{Steven J. Miller}
\email{\textcolor{blue}{\href{mailto:sjm1@williams.edu}{sjm1@williams.edu}},  \textcolor{blue}{\href{Steven.Miller.MC.96@aya.yale.edu}{Steven.Miller.MC.96@aya.yale.edu}}}
\address{Department of Mathematics and Statistics, Williams College, Williamstown, MA 01267}
\author{Eyvindur A. Palsson}
\email{\textcolor{blue}{\href{mailto:palsson@vt.edu}{palsson@vt.edu}}}
\address{Department of Mathematics, Virginia Tech, Blacksburg, VA 24061}
\author{Ethan Pesikoff}
\email{\textcolor{blue}{\href{mailto:ethan.pesikoff@yale.edu}{ethan.pesikoff@yale.edu}}}
\address{Department of Mathematics, Yale University, New Haven, CT 06511}
\author{Charles Wolf}
\email{\textcolor{blue}{\href{mailto:charles.wolf@rochester.edu}{charles.wolf@rochester.edu}}}
\address{Department of Mathematics, Rochester, NY, 14627}
\thanks{The preparation of this paper was only possible due to the Twentieth Annual Workshop in
Combinatorial and Additive Number Theory  (CANT 2022,  May 24--27, 2022). The authors
are immensely grateful to Melvyn Nathanson for organizing this conference. In addition, this work was supported by NSF grant $\#$1947438 and Williams College. E. A. Palsson was supported in part by Simons Foundation grant $\#$360560. }
\subjclass[2020]{52C10, 52C35, 52C30, 52B15, 52B11}
\keywords{
Erd\H{o}s Problems, Discrete Geometry, 
Angles, Restricted Point Configurations.
}
\date{\today}
\begin{document}

\maketitle

\begin{abstract} 
The Erd\H{o}s distinct distance problem is a ubiquitous problem in discrete geometry. Somewhat less well known is Erd\H{o}s' distinct angle problem, the problem of finding the minimum number of distinct angles between $n$ non-collinear points in the plane. Recent work has introduced bounds on a wide array of variants of this problem, inspired by similar variants in the distance setting. 

In this short note, we improve the best known upper bound for the minimum number of distinct angles formed by $n$ points in general position from  $O(n^{\log_2(7)})$ to $O(n^2)$. Before this work, similar bounds relied on projections onto a generic plane from higher dimensional space. In this paper, we employ the geometric properties of a logarithmic spiral, sidestepping the need for a projection.

We also apply this configuration to reduce the upper bound on the largest integer such that any set of $n$ points in general position has a subset of that size with all distinct angles. This bound is decreased from $O(n^{\log_2(7)/3})$ to $O(n^{1/2})$.
\end{abstract}

\tableofcontents

\section{Introduction}

In 1946, Erdős introduced the distinct distance problem in his paper ``On sets of distances of $n$ points," conjecturing that the minimum number of distinct distances formed by $n$ points in the plane was $\Theta(n/\sqrt{\log n})$, the number of distances formed by points in $\sqrt{n} \times \sqrt{n}$ integer lattice \cite{ErOg}. This problem, while simple to state, proved challenging. In 2015, Guth and Katz finally proved a nearly matching lower bound of $\Omega(n/\log n)$ on the minimal number of distinct distances \cite{GuthKatz}. Since 1946, numerous variants of the problem have been considered, including the minimum number of distinct distances on restricted point sets. 

There is an analogous, far less studied problem for angles introduced by Erdős and Purdy \cite{ErPur}. What is $A(n)$, the minimum number of distinct angles formed by $n$ not all collinear points on the plane? In the angle setting, regular $n$-gon's are conjectured optimal. 

Recent work introduces new bounds on a variety of variants of the distinct angle problem \cite{GenPaper}. In particular, $A_{\textrm{gen}}(n)$, the minimum number of distinct angles formed by $n$ points in general position (with no three points on a line and no four on a circle) is shown to be $\Omega(n)$ and $O(n^{\log_2(7)})$. In this paper, we show first show that the methods in \cite{GenPaper} can be extended to provide a bound of $O(n^2 2^{O(\sqrt{\log n}}))$. We discuss this proof in Section \ref{sec: discussion of methods}. Then by an altogether new method which avoids projections altogether and chooses a configuration of points on a logarithmic spiral, we have the following.

\begin{theorem}\label{thm:AGenN^2}
We have $A_{\textrm{gen}}(n) = O(n^2)$.
\end{theorem}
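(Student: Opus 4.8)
The plan is to place all $n$ points on a single logarithmic spiral and exploit its self-similarity under the complex multiplication $z \mapsto \zeta z$. Identifying $\R^2$ with $\C$, I would take $P_k = \zeta^k$ for $k = 1, \dots, n$, where $\zeta = \rho e^{i\alpha}$ is a fixed complex number of modulus $\rho = |\zeta| \neq 1$; these points lie on the logarithmic spiral $r = \rho^{\theta/\alpha}$. The decisive observation is that the angle at any vertex depends only on the differences of the indices. For distinct $i,j,k$ the two vectors issuing from the vertex $P_j$ factor as $P_i - P_j = \zeta^j(\zeta^{\,i-j} - 1)$ and $P_k - P_j = \zeta^j(\zeta^{\,k-j} - 1)$, so the common factor $\zeta^j$ cancels in their ratio and the unsigned angle is $\angle P_i P_j P_k = \bigl|\arg\tfrac{\zeta^{\,k-j}-1}{\zeta^{\,i-j}-1}\bigr|$, which depends only on $u = i-j$ and $v = k-j$.

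Given this, the counting is immediate. As $i,j,k$ range over the distinct elements of $\{1,\dots,n\}$, the pair $(u,v)$ consists of distinct nonzero integers drawn from $\{-(n-1),\dots,-1,1,\dots,n-1\}$, and interchanging the two rays swaps $u$ and $v$ without changing the angle. Hence every angle of the configuration is labelled by an unordered pair $\{u,v\}$ taken from a set of $2(n-1)$ integers, so there are at most $\binom{2(n-1)}{2} = O(n^2)$ distinct angles, regardless of the precise value of $\zeta$.

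What remains — and where I expect the real work to lie — is to choose $\zeta$ so that the $P_k$ are in general position, since $A_{\textrm{gen}}(n)$ is the minimum taken over exactly such configurations. I would argue by genericity. Fix a triple $i,j,k$; the three points are collinear precisely when $\mathrm{Im}\tfrac{\zeta^k-\zeta^i}{\zeta^j-\zeta^i}=0$, and the bracketed expression is a nonconstant rational function of $\zeta$, so by the open mapping theorem its imaginary part cannot vanish identically and the offending set of $\zeta$ has planar measure zero. Similarly, four of the points are concyclic (or collinear) exactly when their cross-ratio is real; the cross-ratio is again a rational function of $\zeta$, which one checks is nonconstant for each quadruple of distinct exponents, so its real locus has measure zero as well. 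Since $\{|\zeta|=1\}$ is also null and there are only finitely many triples and quadruples, the union of all excluded sets has measure zero, leaving a full-measure choice of admissible $\zeta$. Any such $\zeta$ produces $n$ points in general position forming $O(n^2)$ distinct angles, which establishes the theorem. The one genuinely delicate point is the nonconstancy of these cross-ratios in $\zeta$: this is what guarantees that the collinearity and concyclicity loci are proper, and hence that a valid spiral actually exists.
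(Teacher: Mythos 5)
Your construction and counting are essentially the paper's own proof written in complex-number notation: your points $P_k=\zeta^k$ are the paper's points $(e^{j\beta},j\beta)$ on the spiral $r=e^{\theta}$, and multiplication by a power of $\zeta$ is exactly the paper's spiral similarity $F_\alpha$, which is how both arguments reduce every angle to a function of index differences and obtain $O(n^2)$ (the paper normalizes the smallest index to get $3\binom{n-1}{2}$; your count $\binom{2(n-1)}{2}$ is the same up to constants). Where you genuinely differ is the general-position step. The paper handles it geometrically and constructively: for sufficiently small $\beta$ the points lie on a short arc of the spiral, which is a convex curve (so no line meets it in three points) whose curvature is strictly monotone (so no circle meets it in four points). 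You instead argue genericity in $\zeta$, and that route is sound once the nonconstancy you flag is verified; it closes in two lines. Label the four points so the exponents decrease, $z_m=\zeta^{i_m}$ with $i_1>i_2>i_3>i_4$. As $\zeta\to\infty$ one has $z_1-z_3\sim\zeta^{i_1}$, $z_2-z_4\sim\zeta^{i_2}$, $z_1-z_4\sim\zeta^{i_1}$, $z_2-z_3\sim\zeta^{i_2}$, so the cross-ratio $\frac{(z_1-z_3)(z_2-z_4)}{(z_1-z_4)(z_2-z_3)}$ tends to $1$; if it were constant it would therefore be identically $1$, and since the cross-ratio equals $1$ exactly when $(z_1-z_2)(z_3-z_4)=0$, this would force $(\zeta^{i_1}-\zeta^{i_2})(\zeta^{i_3}-\zeta^{i_4})\equiv 0$, impossible for distinct exponents. (The collinearity case is easier: constancy of $(\zeta^{k-i}-1)/(\zeta^{j-i}-1)$ would make $\zeta^{k-i}-1$ a constant multiple of $\zeta^{j-i}-1$, impossible when $k-i\neq j-i$.) So your proof is correct with this check filled in. The trade-off between the two routes: the paper's argument is elementary and exhibits an explicit admissible configuration, while yours invokes the open mapping theorem and measure-zero real-analytic loci but yields the slightly stronger conclusion that almost every $\zeta$ with $|\zeta|\neq 1$ works.
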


Theorem \ref{thm:AGenN^2} is proved in Section \ref{sec: gen position bound}.

In Section \ref{sec: Rgen bound} we consider a related variant of this distinct angle problem also considered in \cite{GenPaper}. This is the minimum maximum size of a subset of $n$ points in general position yielding all distinct angles, $R_{\textrm{gen}}(n)$.  In other words, $R_{\textrm{gen}}(n)$ is the largest integer such that any planar point-set of $n$ points contains a subset of the given size without repeated angles. In \cite{GenPaper} $R_{\textrm{gen}}(n)$ is shown to be $O(n^{\log_2(7)/3}$ and $\Omega(n^{1/5})$. As an application of the logarithmic spiral configuration we show the following.

\begin{theorem}\label{thm: distinct angle set bd}
We have $R_{\textrm{gen}}(n) = O(\sqrt{n})$.
\end{theorem}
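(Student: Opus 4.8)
The plan is to reuse the logarithmic spiral configuration from the proof of Theorem~\ref{thm:AGenN^2} and to show that any subset of it with all distinct angles must satisfy a Sidon-type multiplicity condition, forcing its size to be $O(\sqrt n)$. Write the spiral points as $z_m = \omega^m$ for $m = 0,1,\dots,n-1$, where $\omega = \rho e^{i\varphi}$ with $\rho \neq 1$ and $\varphi$ chosen so that the $z_m$ are in general position. The key structural fact I would record first is that the map $z \mapsto \omega^t z$ is a spiral similarity of the plane sending $z_m$ to $z_{m+t}$. Since similarities preserve angles exactly, the angle $\angle z_i z_j z_k$ depends only on the pair of index differences $(i-j,\,k-j)$, and in particular it is invariant under a common integer translation of all three indices.

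Now let $S = \{z_m : m \in J\}$ be any subset all of whose angles are distinct, and set $r = |J|$. I would establish the following claim: no nonzero integer $t$ occurs three or more times as a difference of elements of $J$. Suppose not; then some $t \neq 0$ has at least three representations, yielding distinct $x_1,x_2,x_3 \in J$ with $x_1+t,\,x_2+t,\,x_3+t \in J$ as well. The two index-triples $\{x_1,x_2,x_3\}$ and $\{x_1+t,x_2+t,x_3+t\}$ are then interchanged by the similarity $z \mapsto \omega^t z$, so the corresponding angles coincide; for instance the angle at apex $z_{x_2}$ equals the angle at apex $z_{x_2+t}$. Because $t \neq 0$ these are two genuinely different triples of points of $S$ (both nondegenerate, by general position) realizing the same angle, contradicting that $S$ has all distinct angles.

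With the claim in hand the conclusion is a one-line count. Every nonzero difference $y-x$ with $x,y \in J$ is attained at most twice and lies in $\{-(n-1),\dots,n-1\}$, a set of $2n-2$ nonzero values. Hence the number of ordered pairs $(x,y) \in J^2$ with $x \neq y$, namely $r(r-1)$, is at most $2(2n-2)$, so $r = O(\sqrt n)$ and therefore $R_{\textrm{gen}}(n) = O(\sqrt n)$.

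The routine parts are this final counting step and the verification — already needed for Theorem~\ref{thm:AGenN^2} — that the chosen spiral is in general position. The conceptual heart, and the step I would be most careful about, is the multiplicity threshold in the claim: a merely repeated difference (multiplicity two, i.e.\ four points forming an index ``parallelogram'') need \emph{not} force any angle coincidence, whereas a difference of multiplicity three produces a translate of a full three-point index set and hence, via the exact angle-preservation of $z \mapsto \omega^t z$, an honest repeated angle. Pinning down that it is multiplicity three rather than two — thereby reducing ``all distinct angles'' to a Sidon-with-bounded-multiplicity condition that gives the square-root bound, rather than the weaker cube-root-type bound coming from naively comparing the $\Theta(r^3)$ angles of $S$ against the $O(n^2)$ angles of the whole spiral — is the crux.
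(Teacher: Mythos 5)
Your proposal is correct and takes essentially the same route as the paper: both exploit the angle-preserving spiral similarity (your $z \mapsto \omega^t z$, the paper's $F_\alpha$) to show that any index difference repeated three times produces two translate-equivalent triples and hence a repeated angle, and both finish with the same pigeonhole count of differences. The only cosmetic difference is that you phrase the key step as a bounded-multiplicity (Sidon-type) condition on ordered, signed differences, whereas the paper packages it as Lemma~\ref{lem:triples} on unordered pairs; the bounds obtained are identical up to constants.
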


\section{Discussion of Methods} \label{sec: discussion of methods}
In \cite{GenPaper}, the bound $A_{\textrm{gen}}(n) = O(n^{\log_2(7)})$ is proved by projecting the points of a $d$-dimensional hypercube onto a generic plane. The argument relies closely on an observation from a paper of Erd\H{o}s, Hickerson, and Pach \cite{EHP}. Given an orthogonal projection $T$ and points $p_1, p_2, p_3$, and $p_4$, 
\begin{equation}\label{eqtn: projection-property}
    p_1 - p_2 \ = \ p_3 - p_4 \ \implies \  d(T(p_1), T(p_2)) \ = \ d(T(p_3), T(p_4)).
\end{equation}
This follows from orthogonal projections being idempotent and self-adjoint. In \cite{GenPaper}, this observation is extended. Two (congruent) triangles with edges composed of the same difference vectors are mapped to congruent triangles under orthogonal projections. Hence, it suffices to count the number of classes of translation equivalent triangles to asymptotically bound the number of distinct angles in the configuration.

It turns out that a similar argument can be used to show that $A_{\textrm{gen}}(n) = O(n^2 2^{O(\sqrt{\log n})})$. Orthogonal projections can easily be chosen such that no four points in the projection lie on a circle. However, since the projection must be injective, points on a line are projected onto a line. Hence, the original configuration must have no three points on a line.

In \cite{GenPaper} this is avoided by drawing the points from a hypercube. However, in the paper of Erdős, Füredi,  Pach, and Ruzsa showing the best known bound for the distance problem in general position, the points are instead drawn from a lattice \cite{EFPA}. The potential obstruction of three points on a line is avoided by taking a subset of the lattice points intersecting with a hypersphere. We outline a similar argument below to get an improved bound to illustrate how this projection technique may be extended.  We take inspiration from a paper of Behrend \cite{Be}.

\begin{prop}
We have $A_{\textrm{gen}}(n) = O(n^2 2^{22\sqrt{\log_2 n}})$.

\end{prop}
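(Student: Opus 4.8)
The plan is to adapt the orthogonal-projection argument of \cite{GenPaper}, but to replace the hypercube with a Behrend-type point set: lattice points lying on a single sphere inside a large box. Restricting to a sphere serves two purposes at once. It guarantees that no three chosen points are collinear, since a line meets a sphere in at most two points, so that a generic projection lands in genuine general position; and it costs only a polynomial factor in the number of points, which is precisely what produces the sub-exponential overhead. Throughout I would use the reduction furnished by \eqref{eqtn: projection-property} together with its triangle extension quoted above: if two triangles with vertices in the configuration are translates of one another, their edge difference vectors agree, hence their images under the orthogonal projection $T$ are congruent and determine the same three angles. Consequently the number of distinct angles in $T(P)$ is at most $3$ times the number of translation-equivalence classes of triangles with vertices in $P$, and each such class is determined by an ordered pair of occurring difference vectors. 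Thus the distinct-angle count is at most $3$ times the square of the number of distinct difference vectors of $P$.

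For the construction, fix integer parameters $d \ge 3$ and $m \ge 2$ and set $B = \{0,1,\dots,m-1\}^d$, a box with $m^d$ lattice points. First I would apply pigeonhole to the squared distance from the origin: this quantity takes at most $d(m-1)^2 + 1 \le d m^2$ distinct values on $B$, so some sphere centered at the origin contains a set $P \subseteq B$ of
\[
N \ \ge\ \frac{m^d}{d m^2} \ =\ \frac{m^{d-2}}{d}
\]
lattice points. Since $P$ lies on a sphere, no three of its points are collinear. Next I would choose a generic orthogonal projection $T \colon \R^d \to \R^2$; for all but a measure-zero set of planes, $T$ is injective on $P$, carries no three points of $P$ to a collinear triple, and, as noted in the discussion above, produces no four concyclic image points. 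Hence $T(P)$ is a set of $N$ points in general position in the plane, to which the angle count applies.

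It remains to count and optimize. The difference vectors of points of $P$ lie in $\{-(m-1),\dots,m-1\}^d$, so there are fewer than $(2m)^d$ of them; by the reduction above the number of distinct angles satisfies $A \le 3(2m)^{2d} = 3\cdot 4^{d} m^{2d}$. Comparing with $N^2 \ge m^{2d-4}/d^2$ gives
\[
\frac{A}{N^2} \ \le\ 3\, d^2\, 4^{d}\, m^{4}, \qquad \log_2\!\frac{A}{N^2} \ \le\ 2d + 4\log_2 m + O(\log d).
\]
For a target size $n$ I would take $d$ of order $\sqrt{2\log_2 n}$ and $m$ the least integer with $m^{d-2}/d \ge n$, which forces $\log_2 m$ of order $\tfrac{1}{\sqrt 2}\sqrt{\log_2 n}$ and $N = \Theta(n)$; discarding surplus points preserves general position and only lowers the angle count. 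With this balance $2d + 4\log_2 m = (4\sqrt 2 + o(1))\sqrt{\log_2 n}$, comfortably below $22\sqrt{\log_2 n}$, so $A = O\!\big(n^2\, 2^{22\sqrt{\log_2 n}}\big)$.

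The main obstacle is the genericity step: one must verify simultaneously that a single plane can be chosen so that $T$ is injective on $P$, creates no new collinear triples, and yields no concyclic quadruples, for the specific lattice set $P$. This is routine—each failure is an algebraic, measure-zero condition on the Grassmannian and only finitely many tuples must be avoided—but it is where the argument would break if the points were too degenerate, which is exactly why removing collinear triples up front via the sphere is essential. The remaining bookkeeping is the parameter balance; the generous constant $22$ absorbs the $O(\log d)$ term and the integer rounding of $d$ and $m$.
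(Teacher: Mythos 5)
Your proposal is correct and follows essentially the same route as the paper's proof: a Behrend-style restriction of a lattice box to a single sphere to kill collinear triples, a generic orthogonal projection under which translation-equivalent triangles project to congruent ones, and a count of translation classes of triples (you bound these by pairs of difference vectors, $(2m)^{2d}$, where the paper uses a coordinate-wise reduction giving $((r+1)^3-r^3)^d$ — same order). The only real difference is bookkeeping: the paper rigidly sets $r=2^d$ while you optimize $d$ and $m$ separately, which in fact yields a smaller exponent constant than $22$; just note that your claim $N=\Theta(n)$ should be replaced by the comparison via minimality of $m$ (i.e., $(m-1)^{d-2}<dn$), since the chosen sphere may hold many more than $n$ points.
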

\begin{proof}
Consider a grid
$G_{r,d} = \{0,\dots,r\}^d$.\\

The triples $(a,b,c)$ and $(a',b',c')$ are equivalent if the second triple can be obtained from the first triple by translation. If we have any triple $(a,b,c)$, then for $i=1,\dots,d$ we can replace the triple of integers $(a_i,b_i,c_i)$ by $(a_i-m_i, b_i-m_i,c_i-m_i)$ where $m_i = \min(a_i,b_i,c_i)$. If we do this for all $i$, we get an equivalent triple $(a',b',c')$ satisfying $\min(a_i',b_i', c_i')=0$ for all $i$. The number of triples $(a_i',b_i',c_i')$ with $a_i',b_i',c_i'\in\{0,\dots,r\}$ and $\min(a_i',b_i',c_i')=0$ is $(r+1)^3 - r^3$. Thus, the number of reduced triples $(a',b', c')$ is $N_{r,d} = ((r + 1)^3 - r^3)^d$. Hence, the number of angles formed by points from $G_{r,d}$ is at most $N_{r,d}/2$, since our triples are ordered.\\

For $r>1$ the points in $G_{r,d}$ are not in general position: there are many lines containing three or more points. For $a\in G_{r,d}$ we define $f(a) = \sum_{i=1}^d a_i^2$. We have $0\le f(a) \le dr^2$. For $l=0,\dots, dr^2$ we define $G_{r,d,l} = \{a\in G_{r,d}: f(a) = l\}$. We can take $l$ so that $|G_{r,d,l}| \geq (r+1)^d (dr^2+1)^{-1},$ this quantity being the mean of the number of points at each radius $0,\dots,dr^2$. No three points from $G_{r,d,l}$ are on a line, as they lie on a common sphere. Taking a subset of the points of $G_{r,d,l}$, there is a set of $M \coloneqq (r+1)^d (dr^2 + 1 )^{-1}$ points with no three on a line. 

Now, let $r = 2^d$ and assume for simplicity that $M = \left \lfloor 2^{d(d-2)}/d \right \rfloor$. For large enough $n$, there exists $d$ such that $2^{(d-1)(d-3)}/(d-1) < n \leq M$. Then, from the above, there exists some $l$ such that a subset of  $G_{r,d,l}$ has $n$ points. This subset has no points on a line, so the configuration can be projected onto a planar configuration in general position. So, it suffices to bound the number of translation equivalence triples by $N_{r,d}$ to yield a bound on $A_{\textrm{gen}}(n)$. 

Now, note that, for $d \geq 17$, $d^2 \geq 16d + 4\log_2 d$. Then, 
\begin{align*}
dn \ &\geq \ 2^{(d-1)(d-3)} \implies \\
\log_2 n \  &\geq \ (d-1)(d-3) - \log_2 d \implies \\
4\log_2 n \ &\geq \ 3d^2 + d^2 - 16d + 12 - 4\log_2 d \geq d^2 \implies \\
2\sqrt{\log_2 n} \ &\geq \  d.
\end{align*}

Now, we have
\begin{align*}
    N_{r,d} \ = \ (3r^2 + 3r + 1)^d \ &\leq \ (4r^2)^d \\
    &= \ (2^{2d +2})^d \\
    &= \ 2^{2(d+1)d} \\
    &< \ n^2 2^{11d} \\
    &\leq \ n^2 2^{22\sqrt{\log_2 n}},
\end{align*} 
yielding the desired result.

\end{proof}

\section{An improved bound on $A_{\textrm{gen}}(n)$} \label{sec: gen position bound}
In the previous section, the extra factor of $2^{O(\sqrt{\log n})}$ arises from taking a subset of the lattice without three points on a line. We can remove such a factor by avoiding projections altogether. In this section, we describe a configuration of points on a logarithmic spiral yielding $A_{\textrm{gen}}(n) = O(n^2)$.

Let the logarithmic spiral $S$ be given by the polar equation $r = e^{\theta}$ for $\theta \in (-\infty, \infty)$. Note that there is a group of mappings $S \to S$ given by 
\[
F_\alpha(r, \theta) \ = \ (e^{\alpha}r, \theta + \alpha).
\]
Scaling by $e^{\alpha}$ is a dilation, mapping triangles to similar triangles, as does rotation by $\alpha$. Hence, mapping via an $F_\alpha$ preserves angles.\\

We now prove Theorem \ref{thm:AGenN^2} that $A_{\textrm{gen}}(n)=O(n^2)$.
\begin{proof}
Let $S$ be given by the polar equation $r = e^{\theta}$ for $\theta \in (-\infty, \infty)$. Then, consider the collection of points $\mathcal{P} = \{(e^{j \beta}, j\beta) \,:\, j \in [n] \}$ on $S$. First, note that, for sufficiently small $\beta$, $\mathcal{P}$ lies within a small arc $S'$ of $S$.  As this arc $S'$ forms part of the boundary of its own convex hull $C$, any line $\ell$ intersecting $C$ has at most two intersections with $S'$.  Consequently no three $p\in\mathcal{P}$ lie on a common line. Likewise, since the curvature of $S$ is strictly monotone, $\beta$ can be chosen small enough such that no four points of $\mathcal{P}$ are on a common circle.

Now we show that the number of distinct angles formed by the points in $\mathcal{P}$, $A(\mathcal{P})$, is at most $3\binom{n-1}{2}$. Given a triple of distinct points $t=((e^{j_1 \beta}, j_1\beta), (e^{j_2 \beta}, j_2\beta), (e^{j_3 \beta}, j_3\beta)) \in \mathcal{P}^3$, let $m = \min\{j_1, j_2, j_3\}$. Then, the map $f_t\coloneqq F_{(1 - m)\beta}$ maps this triple to another forming the same angles, now with one of the points as $(e^\beta, \beta)$. 

Hence, each of the distinct angles formed by points in $\mathcal{P}$ is formed by a triple with one point $(e^\beta, \beta)$. Since there are $\binom{n-1}{2}$ ways to choose the other two points in the triple, and each triple can yield at most three distinct angles, we have
\begin{equation}
A(\mathcal{P}) \ \leq \ 3\binom{n-1}{2},    
\end{equation}
yielding $A_{\textrm{gen}}(n) = O(n^2)$, as desired.
\end{proof}

\section{An Improved Bound on $R_{\textrm{gen}}(n)$} \label{sec: Rgen bound}
The fact that this configuration introduces no three points on a line and no four on a circle yields an improved upper bound for the minimum maximum size of a subset of $n$ points in general position yielding all distinct angles, $R_{\textrm{gen}}(n)$.  The current best known upper bound on this quantity is $O(n^{\log_2(7)/3})$ from \cite{GenPaper}.\\

Letting $x_i,y_i\in[n]$ for $1\leq i\leq 3$, we say that two triples $(x_1,x_2,x_3),(y_1,y_2,y_3)$ are equivalent if $x_1-y_1=x_2-y_2=x_3-y_3$. We then have the following lemma.
\begin{lem} \label{lem:triples}
      Let $R\subseteq [n]$ such that $\abs{R}=m$.  If $m \choose 2$ $\geq 2n-1$, then $R$ contains a pair of distinct but equivalent triples.
\end{lem}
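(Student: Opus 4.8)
The plan is to recast the existence of two distinct equivalent triples as a statement about how often a fixed translate $R+d$ overlaps $R$, and then force a large overlap by a counting (pigeonhole) argument. Concretely, I would first observe that if for some nonzero $d$ the intersection $R\cap(R+d)$ contains three elements $u,v,w$, then $u,v,w\in R$ and $u-d,v-d,w-d\in R$, so the triples $(u,v,w)$ and $(u-d,v-d,w-d)$ both have all entries in $R$, are equivalent (each coordinate differs by the same amount $d$), and are distinct because $d\neq 0$. Thus it suffices to produce a single nonzero $d$ with $|R\cap(R+d)|\geq 3$.

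Next I would double count. For each nonzero integer $d$, the quantity $|R\cap(R+d)|$ equals the number of ordered pairs $(x,y)\in R\times R$ with $x-y=d$. Summing over all nonzero $d$ therefore counts every ordered pair of distinct elements of $R$ exactly once, giving $\sum_{d\neq 0}|R\cap(R+d)| = m(m-1) = 2\binom{m}{2}$. Since $R\subseteq[n]$, the only differences that can occur lie among the $2n-2$ nonzero values in $\{-(n-1),\dots,n-1\}$, so at most $2n-2$ terms in this sum are nonzero.

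The pigeonhole step then closes the argument: if every nonzero $d$ satisfied $|R\cap(R+d)|\leq 2$, the sum would be at most $2(2n-2)=4n-4$, forcing $\binom{m}{2}\leq 2n-2$. Contrapositively, the hypothesis $\binom{m}{2}\geq 2n-1$ guarantees some nonzero $d$ with $|R\cap(R+d)|\geq 3$, and the first step converts this into the desired pair of distinct equivalent triples.

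I expect the only real subtlety to be bookkeeping rather than any deep obstacle: one must verify that exactly $2n-2$ difference values are available, so that the threshold comes out as $2n-1$ rather than something weaker, and that the factor of two from counting ordered pairs is correctly matched against the ``at most $2$ per difference'' bound. The payoff for the spiral configuration is then immediate, since the overlap of size three yields two genuine triples of three distinct points, and $d\neq 0$ makes them distinct, so they record equal angles at different vertices.
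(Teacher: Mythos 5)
Your proof is correct and is essentially the paper's argument: both count pairs of elements of $R$ by their difference and apply pigeonhole to find three pairs sharing a common difference, which yields the two distinct equivalent triples. The only cosmetic difference is that you work with ordered pairs and signed differences (so both sides of the count double, via $\sum_{d\neq 0}|R\cap(R+d)| = 2\binom{m}{2}$ against $2n-2$ bins), whereas the paper uses unordered pairs with positive differences ($\binom{m}{2}$ pairs against $n-1$ bins); the two factors of $2$ cancel and the threshold $2n-1$ comes out identically.
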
 
\begin{proof}
    The number of pairs $(x,y)\in R^2$ such that $x>y$ is $m \choose 2$, and the maximum number of possible differences is $n-1$ (ranging from $1$ to $n-1$).  Then the condition $m \choose 2$ $\geq 2n-1$ ensures by the pigeonhole principle that there are three pairs with the same difference and hence a pair of equivalent triples. 
\end{proof}
We now prove Theorem \ref{thm: distinct angle set bd}.
\begin{proof}
Let $\mathcal{P}$ be the logarithmic spiral point configuration as in Theorem \ref{thm:AGenN^2}. Let $\mathcal{P}'\subseteq\mathcal{P}$ with $\abs{\mathcal{P}'}=m$, again assuming $m \choose 2$ $\geq 2n-1$.  Define $Q\subseteq[n]$ such that $\mathcal{P}'=\{(e^{j \beta}, j\beta) \,:\, j \in Q \}$, and, by Lemma \ref{lem:triples}, $Q$ contains a pair of equivalent triples $s = (x_1, x_2, x_3)$ and $t = (y_1, y_2, y_3)$. Therefore the triples of points in $\mathcal{P}'$ corresponding to $s$ and $t$ define repeated angles. This is because the triple of points corresponding to $s$ are mapped to those corresponding to $t$ by $F_{((y_1 - x_1)\beta)}$.

Now, note that $m \geq 2n^{1/2} + 1/2$ implies $\binom{m}{2} \geq 2n - 1$. Then $R_{\textrm{gen}}(n) = O(\sqrt{n})$, as desired.
\end{proof}


\section{Future Work}
While this paper dramatically improves the state of the art upper bound for $A_{\textrm{gen}}$ to $O(n^2)$, we still only have $A_{\textrm{gen}}(n) = \Omega(n)$.  Lessening or even eliminating this gap would be a major potential target of future research. Additionally, this paper significantly improves the upper bound of $R_{\textrm{gen}}(n)$ to $O(\sqrt{n})$ from $O(n^{\log_2(7)/3})$ in \cite{GenPaper}. Nonetheless, reducing the gap with the current lower bound of $\Omega(n^{1/5})$ (also from \cite{GenPaper}) is an open problem.

The logarithmic spiral configuration may also have applications in other angle problems, such as repeated angle problems and angle chain problems appearing in the literature. For example, see Palsson, Senger, and Wolf's work on angle chains  in \cite{AngleChains}.


\end{document}